\makeatletter
\providecommand\UseOneTimeHook[2][]{}
\makeatother

\documentclass[a4paper,twocolumn,11pt,submission]{quantumarticle}
\makeatletter
\providecommand{\@afterenddocumenthook}{}
\makeatother
\pdfoutput=1
\emergencystretch=2em

\usepackage{amsmath,amssymb,amsthm,mathtools}
\usepackage{physics}
\usepackage{microtype}
\usepackage{enumitem}
\usepackage{authblk}
\usepackage{tikz}
\usepackage{quantikz}
\usetikzlibrary{arrows.meta,positioning,fit,calc}
\usepackage{hyperref}
\hypersetup{colorlinks=true,linkcolor=blue,citecolor=blue,urlcolor=blue}

\DeclareUnicodeCharacter{03B5}{$\epsilon$}
\DeclareUnicodeCharacter{03F5}{$\epsilon$}
\DeclareUnicodeCharacter{000F}{$\epsilon$}
\renewcommand{\varepsilon}{\epsilon}

\title{Counterfactual Local Friendliness: An $\epsilon$-Bounded Interaction-Free Paradox and a Disturbance-Robust Three-Box Inequality}
\author[1]{Maximilian Ralph Peter von Liechtenstein}
\affil[1]{Independent Researcher}
\date{30 August 2025}

\newtheorem{theorem}{Theorem}
\newtheorem{lemma}{Lemma}
\newtheorem{definition}{Definition}

\begin{document}
\maketitle

\begin{abstract}
We introduce a new paradox, which we call \emph{Counterfactual Local Friendliness (CLF)}: a Wigner’s-friend–type logical collision in which every decisive inference is obtained by interaction-free flags whose disturbance on the probed object is bounded by a tunable parameter $\epsilon$. Under (Q) universal unitarity for outside observers, (S) single-outcome facts, (C) cross-agent consistency, and (IF-$\epsilon$) $\epsilon$-counterfactuality of the friends’ internal modules, quantum theory predicts a nonzero post-selected event that forces mutually incompatible certainties about a single upstream variable --- without appealing to absorptive or projective in-lab measurements.

We also derive an $\epsilon$-IF three-box \emph{noncontextual bound}: any single-world, noncontextual model satisfying exclusivity and $\epsilon$-stability must obey $P_A + P_B \le 1 + K\,\epsilon$, while quantum theory yields $P_A = P_B = 1$, violating the bound for arbitrarily small $\epsilon$. Together these results isolate what is paradoxical about counterfactual phenomena: not energy exchange with the probed system, but the incompatibility of agent-level facts in single-world narratives.
\end{abstract}

\noindent \textbf{Keywords:} interaction-free measurement; Wigner’s friend; local friendliness; contextuality; disturbance bound; quantum error correction; imaging.

\section{Introduction}
Interaction-free measurement (IFM) shows that a system can reveal the presence of a “live” absorber without absorbing the probe. Classic examples include the Elitzur–Vaidman bomb tester \cite{Elitzur1993} and its Zeno-boosted variants \cite{Kwiat1995,Kwiat1999}, as well as Hardy’s paradox \cite{Hardy1992}. Separately, Wigner’s-friend and “local friendliness” arguments demonstrate tensions between universal unitarity and single-outcome, agent-independent facts. Here we combine these strands and add a control parameter $\epsilon$ that quantifies how close to “no interaction at all” an IFM can be, operationally.

We pursue two goals: (i) exhibit a logical collision of certainties generated by IFM-based friends and outside Wigners, and (ii) give a noncontextual inequality for a three-box IFM that is violated by quantum predictions even as $\epsilon \to 0$. The novelty is the $\epsilon$-counterfactual formalization and its use in closing the intuitive “maybe it interacted a little” escape route.

\subsection{Notation and conventions}
$\epsilon$ denotes the disturbance bound. The trace distance between states $\rho,\sigma$ is $T(\rho,\sigma) = \frac{1}{2}\norm{\rho - \sigma}_1$. A \emph{decisive outcome} is the outcome whose occurrence triggers an inference (e.g., a Dark flag) and to which the $\epsilon$-bound on disturbance applies.

A \emph{flag qubit} takes values $D$ (Dark) or $B$ (Bright) written by a unitary IFM oracle; no photonic clicks are required in our theory model.

We use the following assumption labels for clarity: (Q) universal unitarity for outside observers; (S) single-outcome facts; (C) cross-agent consistency; (IF-$\epsilon$) $\epsilon$-counterfactuality (Def.~\ref{def:ecf}) of the friends’ internal modules; (IF-$\epsilon$-stab) $\epsilon$-stability (Def.~\ref{def:stab}) used in Sec.~\ref{sec:threebox}.

“Probability nonzero” means strictly greater than zero under the ideal unitary model; robustness to small noise is handled in Appendix B.

\subsection{Statement of novelty and contributions}
\textbf{New paradox (primary).} We introduce \emph{Counterfactual Local Friendliness (CLF)} --- to our knowledge the first Wigner’s-friend no-go in which all decisive inferences are made via interaction-free flags with a provable $\epsilon$-bound on disturbance to the probed system. Prior Wigner’s-friend / local-friendliness arguments rely on absorptive/projective measurements inside the labs; our $\epsilon$-IF formulation removes “measurement disturbance of the object” as an explanatory loophole.

\textbf{New quantitative bound.} We formulate an $\epsilon$-stable three-box inequality ($P_A + P_B \le 1 + K\,\epsilon$) that is violated by quantum mechanics for arbitrarily small $\epsilon$, giving a disturbance-robust, device-agnostic witness of nonclassicality.

\textbf{New formulations across paradox families.} We provide $\epsilon$-IF versions of GHZ all-versus-nothing, Peres–Mermin state-independent contextuality, Leggett–Garg macrorealism, and a sheaf-theoretic “no global history” statement --- each with explicit $\epsilon$ (and $\sqrt{\delta}$) slacks --- showing broad portability of the $\epsilon$-counterfactual framework.

\textbf{Engineering bridge.} We specify how $\epsilon$ can be estimated from observables (visibility, loss, leakage) and composed over rounds, enabling $\epsilon$-certified imaging, $\epsilon$-aware QEC scheduling, and audit logs.

\textbf{Claim of novelty.} To the best of our knowledge, CLF as defined here constitutes a new paradox: a no-go for single-world, agent-independent facts where every decisive inference is counterfactual and $\epsilon$-bounded. We also believe the $\epsilon$-stable three-box inequality in this exact form is new.

\section[$\epsilon$-counterfactuality: formal model of IFM]{\boldmath$\epsilon$-counterfactuality: formal model of interaction-free measurement}
\label{sec:ecfmodel}
\subsection{Preliminaries}
Let $S$ be the system (photon/path), $B$ the “bomb,” and $D$ a detector/pointer. An IFM instrument with two outcomes ($x$, $\bar{x}$) is a CPTP map $\{ \mathcal{E}_x,\, \mathcal{E}_{\bar{x}}\}$ on $(S\otimes B)$, with classical register $X$ output to the outside (indicating which outcome occurred).

\subsection{Definition of $\epsilon$-counterfactual IFM}
\label{def:ecf}
{\setcounter{definition}{1}%
\begin{definition}[$\epsilon$-counterfactual IFM]\label{def:ecf2}
An IFM $\{\mathcal{E}_x\}$ is \emph{$\epsilon$-counterfactual} for outcome $x$ on bomb states in a set $\mathcal{B}$ if for all $\rho_B \in \mathcal{B}$ and all system states $\rho_S$, the reduced bomb state change obeys 
\begin{equation}
\label{eq:ecf-def}
\left\Vert \Tr_S\!\big[\mathcal{E}_x(\rho_S \otimes \rho_B)\big] - \rho_B \right\Vert_1 \;\le\; \epsilon~.
\end{equation}
When $\epsilon=0$ the bomb is left exactly unchanged whenever outcome $x$ is registered.
\end{definition}}

\subsection{Oracle description and unitary gadget}
A practical implementation of an $\epsilon$-counterfactual measurement uses an \emph{IFM oracle}: a unitary gadget acting on a mediator qubit and the bomb such that a “Dark” outcome indicates the bomb was in state $|1\rangle$ (“live”) while a “Bright” outcome indicates the bomb was $|0\rangle$ (“dud”). One realization is given in Appendix A. In essence, the mediator qubit passes through an interference cycle that includes a phase flip controlled by the bomb; at the output, it coherently flags $D$ if the bomb was live, without any photon absorption. This is a \emph{unitary flag} --- no absorber clicks are needed. Figure~\ref{fig:oracle} shows a circuit cartoon of such an IFM oracle.

\begin{figure}[!htbp]
\centering
\resizebox{0.95\linewidth}{!}{
\begin{quantikz}[row sep={0.8cm,between origins}, column sep=0.7cm]
\lstick{$b$ (bomb)}      & \qw       & \ctrl{1} & \qw       & \ctrl{2} & \qw \\
\lstick{$S$ (mediator)}  & \gate{H}  & \gate{Z} & \gate{H}  & \qw       & \qw \\
\lstick{$W$ (flag)}      & \qw       & \qw      & \qw       & \targ{}   & \meter{} 
\end{quantikz}
}
\caption{IFM oracle with unitary flag.
The mediator $S$ undergoes $H\!-\!Z^b\!-\!H$: a controlled-$Z$ from the bomb $b$ acts between the two Hadamards, leaving $b$ untouched.
A final CNOT ($b\!\rightarrow\!W$) toggles the flag to Dark iff $b{=}1$ (the $\epsilon$-counterfactual outcome).}
\label{fig:oracle}
\end{figure}
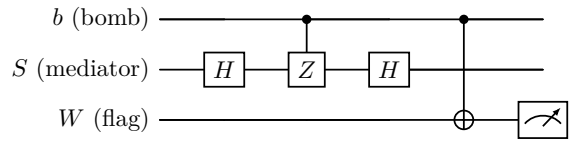

\subsection{Channel-level $\epsilon$ (diamond norm) and metric choices}
Our baseline $\epsilon$ bounds the bomb’s state change on decisive outcomes. For completeness we also define a channel-level notion. Let $\mathcal{E}_x$ be the CPTP map induced on the bomb conditional on decisive outcome $x$. Define the channel $\epsilon$ as a diamond-norm deviation from identity: $\norm{\mathcal{E}_x - \mathrm{id}}_{\diamond} \le \epsilon_{\diamond}$. This implies a state-level bound for all inputs and ancillas by definition, with constants linking $\epsilon_{\diamond}$ and the trace-distance bound $\epsilon$ used elsewhere. For small perturbations, the Bures angle and fidelity provide equivalent bounds up to second order. In applications, one may report either (i) a direct state-level $\epsilon$ (empirical and simple) or (ii) a worst-case channel $\epsilon_{\diamond}$ (more conservative). We keep results at the state level for simplicity.

\section{CLF paradox: two friends with IFM and one coin}
The CLF scenario involves two spatially separated labs (A and B), each containing a friend who uses an IFM oracle to detect a local “bomb” qubit without disturbing it (beyond $\epsilon$). An initial coin qubit $C$ is prepared and coherently distributed to the labs such that their inferences may conflict. We outline the logical structure below; a concrete unitary realization is given in Appendix A.

Each lab’s IFM gadget writes a flag qubit $W_j$ ($j=A,B$) which can be $D$ (dark) or $B$ (bright). A Dark flag $W_j=D$ implies the local bomb $b_j$ was in state $1$ (live) with certainty (by the IFM oracle design); Bright implies $b_j=0$ (dud). In our setup, $b_A = C_A$ is simply the $Z$-basis value of the coin (hence $b_A=1$ corresponds to coin $C=0$), while $b_B$ is the $X$-basis value of another coin register $C_B$ entangled with $C$ (hence $b_B=1$ corresponds to coin $C=1$). The two Wigners (outside observers) measure $W_A$ and $W_B$. We post-select on the event $W_A = D$ and $W_B = D$, which quantumly occurs with some probability $p>0$ (Appendix A).

In those runs, the outside agents infer $b_A = b_B = 1$. Chaining back through the coherent copy maps enforced inside the labs, cross-consistency (C) forces mutually incompatible assignments for the single upstream coin $C$: one branch of the reasoning makes it $C=0$ with certainty, the other $C=1$ with certainty --- a contradiction with single-outcome facts (S). This is the logical collision of certainties promised.

\subsection{Theorem 1: CLF no-go result}
\begin{theorem}[CLF No-Go]
Under assumptions (Q), (S), (C), and (IF-$\epsilon$), there exists a unitary-only protocol (as described above and in Appendix A) and some $\epsilon_0 > 0$ such that for all $0 \le \epsilon < \epsilon_0$, the post-selected event $\{W_A = D,\;W_B = D\}$ has nonzero probability and forces an inconsistent set of agent-level certainties about a single upstream coin $C$ (i.e., a logical collision).
\end{theorem}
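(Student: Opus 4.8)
The plan is to instantiate the protocol sketched above (full details deferred to Appendix~A) as an explicit finite-dimensional unitary circuit on registers $C, C_B, b_A, b_B, S_A, S_B, W_A, W_B$, and then to run the argument in two stages: first establish the logical collision in the idealized case $\epsilon = 0$, and then promote it to a robust statement for all $0 \le \epsilon < \epsilon_0$ via a continuity estimate resting on Definition~\ref{def:ecf2}.

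\emph{Step 1 (explicit circuit and post-selection probability).} I would fix a prepared entangled state of $C$ and $C_B$ chosen so that the $Z$-reading $b_A$ of $C$ and the $X$-reading $b_B$ of $C_B$ are correlated in the Hardy-like fashion described above, apply the two IFM oracles of Fig.~\ref{fig:oracle} in parallel in labs $A$ and $B$, and have each lab's gadget coherently copy $b_j$ into its flag $W_j$. Projecting onto $\{W_A = D,\, W_B = D\}$ then yields an amplitude that I would compute directly; exactly as in Hardy's paradox, the preparation and the oracle phases are arranged so that this amplitude does not vanish, giving $p := P(W_A = D,\, W_B = D) > 0$ at $\epsilon = 0$.

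\emph{Step 2 (the two inference chains).} Using (Q), each outside Wigner treats the relevant lab as a closed unitary system, so the flag $W_j$ is a faithful record: $W_A = D$ certifies (by the oracle design together with (IF-$\epsilon$) at $\epsilon = 0$) that $b_A = 1$, hence, chaining back through the coherent copy map defining $b_A$ as the $Z$-value of $C$, that $C = 0$ with certainty; symmetrically $W_B = D$ certifies $b_B = 1$, hence $C = 1$ with certainty. Here (C) is what licenses each Wigner to import the matching friend's conclusion, and (S) is what makes ``the value of $C$ in this run'' a single well-defined fact rather than two branch-relative ones. On the post-selected event both certifications apply at once, so (S)+(C) force $C = 0$ and $C = 1$ simultaneously --- the promised collision. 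I expect \emph{this step to be the main obstacle}: it is delicate in precisely the place where all Wigner's-friend no-gos are delicate, namely justifying that the intra-lab copy maps, which by (Q) collapse nothing, nevertheless licence a definite-value inference about the upstream $C$ in the post-selected branch --- and the difficulty is compounded here by the complementarity between the $Z$-reading $b_A$ and the $X$-reading $b_B$, so one must verify that the conditional state along each chain genuinely pins the relevant coin value.

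\emph{Step 3 ($\epsilon$-robustness).} For $\epsilon > 0$, Definition~\ref{def:ecf2} bounds the trace-distance disturbance of each bomb on its Dark outcome by $\epsilon$; by monotonicity of trace distance under CPTP maps and a union bound over the two labs, every downstream probability --- in particular $p(\epsilon)$ --- differs from its $\epsilon = 0$ value by at most a fixed multiple of $\epsilon$, and likewise each ``certainty'' degrades to a near-certainty with error $O(\epsilon)$. Taking $\epsilon_0 > 0$ smaller than both $p(0)/K_1$ and $1/K_2$ for the pertinent constants then ensures, for all $0 \le \epsilon < \epsilon_0$, that the post-selected event retains nonzero probability and that the two inferred values of $C$ stay separated by more than their combined error bars, so the set of agent-level assignments licensed by (Q), (S), (C), (IF-$\epsilon$) is still inconsistent. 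The only quantitative inputs are $p(0) > 0$ from Step~1 and the trace-distance propagation bound, both routine once the circuit of Appendix~A is fixed.
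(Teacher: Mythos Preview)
Your proposal is correct and follows essentially the same three-step structure as the paper's own proof sketch: instantiate the explicit circuit of Appendix~A to obtain $p>0$, run the Frauchiger--Renner-style inference chains under (Q), (S), (C) to force the $C=0$/$C=1$ collision on the post-selected event, and then appeal to continuity for the $\epsilon$-robustness. The only minor divergence is quantitative --- the paper's Appendix~B derives an $O(\sqrt{\epsilon})$ degradation of the certainties via Fuchs--van de Graaf rather than the $O(\epsilon)$ you claim from trace-distance monotonicity --- but this does not alter the argument's structure or its conclusion.
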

\begin{proof}[Proof sketch]
The IFM oracle realizes a basis-dependent, disturbance-free inference in each lab: Dark implies bomb = live; Bright implies bomb = dud. The coherent coin feed-forward (Appendix A) reproduces the Frauchiger–Renner dependency graph with these IFM-style observables. Standard modal reasoning then yields contradictory certainties “$C=0$” and “$C=1$” on the same post-selected subset. Robustness to small $\epsilon$ follows by continuity (Appendix B).
\renewcommand\qedsymbol{} 
\end{proof}

\noindent \emph{Caveat.} Earlier drafts spoke of two clicks at dark ports. Here we make the outcome registers explicit as qubits $W_A$ and $W_B$; in the ideal oracle model both can be Dark simultaneously without violating particle number conservation.

Figure~\ref{fig:clf} illustrates the dependency graph of inferences in the CLF scenario. In the post-selected subset where both flags are Dark, the loop of certainties becomes inconsistent: $W_A = D$ implies $b_A=1$ which (by lab encoding) implies the coin $C=0$, whereas $W_B = D$ implies $b_B=1$ which implies $C=1$. This contradiction is the crux of CLF.
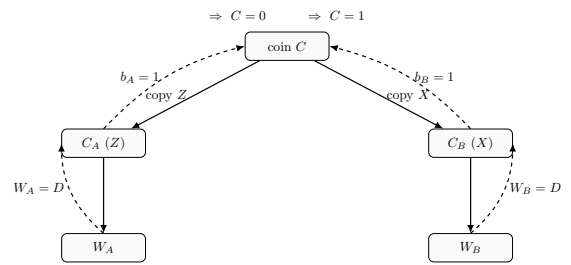
\begin{figure}[!htbp]
\centering
\resizebox{0.9\linewidth}{!}{
\begin{tikzpicture}[
  font=\small,
  node distance = 18mm and 26mm,
  var/.style   = {draw, rounded corners, fill=black!2, inner sep=2.5pt,
                  minimum width=22mm, minimum height=7.5mm},
  flow/.style  = {-{Latex[length=2.2mm,width=1.8mm]}, line width=0.9pt},
  infer/.style = {-{Latex[length=2.2mm,width=1.8mm]}, dashed, line width=0.9pt}
]
\node[var] (C)  {coin $C$};
\node[var, below left = of C] (CA) {$C_A$ ($Z$)};
\node[var, below right= of C] (CB) {$C_B$ ($X$)};
\node[var, below = 20mm of CA] (WA) {$W_A$};
\node[var, below = 20mm of CB] (WB) {$W_B$};

\draw[flow] (C)  -- node[left, pos=.52]  {copy $Z$} (CA);
\draw[flow] (C)  -- node[right,pos=.52]  {copy $X$} (CB);
\draw[flow] (CA) -- (WA);
\draw[flow] (CB) -- (WB);

\draw[infer] (WA.north) to[bend left=24] node[left,pos=.55] {$W_A=D$} (CA.west);
\draw[infer] (CA.north) to[bend left=16] node[left,pos=.48] {$b_A=1$} (C.west);
\node at ($(C)+(-13mm,8.2mm)$) {$\Rightarrow~C=0$};

\draw[infer] (WB.north) to[bend right=24] node[right,pos=.55] {$W_B=D$} (CB.east);
\draw[infer] (CB.north) to[bend right=16] node[right,pos=.48] {$b_B=1$} (C.east);
\node at ($(C)+( 13mm,8.2mm)$) {$\Rightarrow~C=1$};

\end{tikzpicture}%
}
\caption{CLF dependency graph. Nodes represent the coin $C$, lab copies $C_A,C_B$, and flags $W_A,W_B$.
Solid arrows show causal dependence (copy of $C$ into lab registers; flag generation).
Dashed arrows show logical inferences on the post-selected subensemble where both flags are $D$.
Each chain implies a different value for $C$ ($C=0$ from $W_A=D$, $C=1$ from $W_B=D$), yielding a contradiction.}
\label{fig:clf}
\end{figure}

\section{An \texorpdfstring{$\epsilon$}{$\epsilon$}-IF three-box inequality (quantitative paradox)}
\label{sec:threebox}
\subsection{Pre/post-selection}
Let $\{\ket{A},\ket{B},\ket{C}\}$ be orthonormal “box” states. 
Preselect and postselect
\begin{subequations}\label{eq:prepost}
\begin{align}
\ket{\psi_i} &= \tfrac{1}{\sqrt{3}} \bigl(\ket{A}+\ket{B}+\ket{C}\bigr), \\
\ket{\psi_f} &= \tfrac{1}{\sqrt{3}} \bigl(\ket{A}+\ket{B}-\ket{C}\bigr).
\end{align}
\end{subequations}

For the two–outcome test $\{\Pi_A,\, I-\Pi_A\}$, the Aharonov–Bergmann–Lebowitz (ABL) rule gives
\begin{equation}
P(\Pi_A = 1 \mid i,f) =
\dfrac{|\langle \psi_f | \Pi_A | \psi_i \rangle|^2}{%
  \substack{
    |\langle \psi_f | \Pi_A | \psi_i \rangle|^2 \\
    + |\langle \psi_f | (I-\Pi_A) | \psi_i \rangle|^2
  }}.
\label{eq:ABL}
\end{equation}

\subsection{$\epsilon$-counterfactual probes of A and B}
We implement tests of $\Pi_A$ and $\Pi_B$ with $\epsilon$-counterfactual IFM oracles: use a mediator qubit that interacts (via a unitary gadget as in Sec.~2) with an external “bomb” $b_A$ or $b_B$ placed in path $A$ or $B$ respectively. We tune the gadget (e.g. via a small beam-splitter reflectivity or Zeno cycles) such that the Dark outcome corresponds to $\Pi=1$ (i.e., the presence of the box is inferred). The $\epsilon$-counterfactuality ensures the bomb’s reduced state changes by at most $\epsilon$ on the decisive outcome.

The experiment is: prepare $|\psi_i\rangle$, choose context $A$ or $B$ to probe with the IFM (placing a bomb in arm $A$ or $B$ accordingly), record the flag outcome if decisive, and finally postselect on $|\psi_f\rangle$. We define operational probabilities $P_A$ and $P_B$ as the observed probabilities of the decisive (Dark) outcome in contexts $A$ and $B$ respectively, conditioned on successful postselection.

\subsection{Noncontextual, single-world bound}
\label{sec:noncontextual-bound}
\begin{definition}[$\epsilon$-stability]\label{def:stab}
Let $\mu(\cdot|A)$ and $\mu(\cdot|B)$ be the ontic distributions of the post-selected preparation when probing $A$ or $B$. The probing is \emph{$\epsilon$-stable} if the total-variation distance between these distributions is bounded by $K'\epsilon$: 
\begin{equation}
\mathrm{TV}\big(\mu(\cdot|A),\,\mu(\cdot|B)\big) \;\le\; K'\,\epsilon~,
\end{equation}
for some constant $K'$ independent of $\epsilon$. 
\end{definition}
Operationally, $\epsilon$-stability follows from $\epsilon$-counterfactuality plus data-processing (contractivity of trace distance) --- see Appendix F for a justification.

Now assume a preparation- and measurement-\emph{noncontextual}, single-world ontological model of this three-box experiment. Let the ontic space be $\Lambda$, and let $v(X) \in \{0,1\}$ denote the predetermined value of proposition $X$ in a given ontic state (e.g., $v(A)=1$ means the box $A$ is occupied in that ontic state). Single-world exclusivity means $v(A)+v(B)+v(C) = 1$ for all ontic states, and $v(A),v(B),v(C)\in\{0,1\}$. Partition $\Lambda$ into disjoint regions $\Lambda_A, \Lambda_B, \Lambda_C$ where $\Lambda_A$ is the set of ontic states with $v(A)=1$, etc. Then define indicator functions $\chi_A = 1_{\Lambda_A}$ and $\chi_B = 1_{\Lambda_B}$. The operational success probabilities can be written as integrals over these regions: 
\begin{align}
P_A &= \int \chi_A\, d\mu(\cdot|A), \label{eq:PA}\\
P_B &= \int \chi_B\, d\mu(\cdot|B). \label{eq:PB}
\end{align}
Exclusivity implies $\Lambda_A \cap \Lambda_B = \emptyset$ and $\Lambda_A \cup \Lambda_B \cup \Lambda_C = \Lambda$ (up to null sets). If switching between probing contexts changes the distribution by at most $O(\epsilon)$ (by $\epsilon$-stability), then:
\begin{equation}
\begin{split}
P_A+P_B \;\le\;& \int (\chi_A+\chi_B)\,d\mu(\cdot|A) \\
  &+ \tfrac{1}{2}\|\mu(\cdot|A)-\mu(\cdot|B)\|_1 \\
  \le\;& 1+K\epsilon~.
\end{split}
\label{eq:threebox}
\end{equation}
for some constant $K$ (e.g. $K=2K'$ if TV is defined without the $1/2$). This is the $\epsilon$-IF three-box inequality.

\subsection{Quantum violation}
Quantum mechanically, the ABL analysis yields $P_A = P_B = 1$ (Sec.~4.1). Hence 
\begin{equation}
P_A + P_B \;=\; 2 \;>\; 1 + K\,\epsilon~,
\end{equation}
violating the bound \eqref{eq:threebox} for arbitrarily small disturbance $\epsilon$. This provides a disturbance-robust, quantitative form of the three-box paradox that pins the failure on noncontextual single-world narratives rather than on measurement back-action.

\section{Discussion and falsifiable criteria}
\textbf{What is new.} (i) An operational $\epsilon$-counterfactual definition tailored to IFM as a disturbance bound on the probed object; (ii) a Wigner-friend contradiction where every agent-level certainty rests on such $\epsilon$-bounded, interaction-free flags; (iii) a compact three-box noncontextual bound whose violation persists as $\epsilon \to 0$, making disturbance-robustness explicit.

\textbf{Falsifiable target.} Any future operational theory that (a) reproduces dark-port certainties with $\epsilon$-bounded bomb disturbance and (b) enforces single-world, noncontextual facts must violate either (Q) or (C). Our results delimit this trade-off quantitatively via inequality \eqref{eq:threebox}.

\textbf{Relation to prior work.} Our CLF construction mirrors the structure of Wigner’s-friend / local-friendliness paradoxes, but replaces projective “inside” measurements by $\epsilon$-counterfactual IFM modules. Our inequality translates the three-box “double certainty” into a disturbance-robust noncontextual bound.

\section{Outlook (theoretical only)}
The $\epsilon$-counterfactual framework invites extensions: (i) counterfactual entanglement-swapping with $\epsilon$-bounded mediators; (ii) multi-box exclusivity graphs yielding state-independent $\epsilon$-IF contextuality inequalities; (iii) categorical semantics — model $\epsilon$-counterfactuality as a natural transformation preserving bomb objects.

\appendix

\section{Explicit unitaries for the CLF protocol}
We detail one explicit realization (at the qubit level) that is sufficient for Theorem 1. Let the upstream coin be $C$ initialized to $|+\rangle = (|0\rangle + |1\rangle)/\sqrt{2}$. Define lab registers $C_A, C_B$ (one per lab) and a routing qubit $R$. Use the unitary 

\begin{multline}
U: \quad |0\rangle_C \;\mapsto\; |0\rangle_{C_A}\,|+\rangle_{C_B}, \\
|1\rangle_C \;\mapsto\; |1\rangle_{C_A}\,|-\rangle_{C_B}~,
\end{multline}

which is an isometry from the single coin qubit into the two lab registers $C_A C_B$. We set the lab bombs such that $b_A = C_A$ (measured in the computational basis) and $b_B$ is the $X$-basis value of $C_B$. We also entangle a routing qubit $R$ with the coin so that, conditioned on $R$, a single probe qubit $S$ visits lab $L_A$ or $L_B$. This coherence makes joint dark-dark events possible (with probability $p>0$). Each lab applies the ideal IFM gadget (unitary oracle from Sec.~2) to its bomb. The outside Wigners measure the flag outputs $W_A, W_B$. A straightforward calculation shows $p>0$ and reproduces the inference loop described in Sec.~3. The above is just one convenient concrete construction (any Frauchiger–Renner style dependency graph suffices). 

\section{Robustness for small $\epsilon$}
If the decisive outcomes are $\epsilon$-counterfactual, the bomb’s state change is bounded in trace distance by $\epsilon$. By the Fuchs–van de Graaf inequality (relating trace distance to fidelity), this implies that all subsequent probability predictions change by at most $O(\sqrt{\epsilon})$. In other words, the certainties in the paradox become $1 - O(\sqrt{\epsilon})$ rather than exactly 1, and the logical implications carry through with small error terms that do not close the contradiction on the post-selected subset (for sufficiently small $\epsilon$).

\section{From gentle measurement to $\epsilon$-stability (sketch)}
\begin{lemma}[Gentleness $\implies \epsilon$-stability]
Consider two experimental contexts ($A$ and $B$) that differ only by the insertion or removal of an $\epsilon$-counterfactual IFM module acting on the target. Suppose the decisive (Dark) outcome in either context occurs with probability at least $1-\delta$. Then there exist constants $K_1, K_2$ such that 

\begin{equation}
\mathrm{TV}\big(\mu(\cdot|A),\,\mu(\cdot|B)\big) \;\le\; K_1\,\epsilon \;+\; K_2\,\sqrt{\delta}~.
\end{equation}
\end{lemma}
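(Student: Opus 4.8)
The plan is to work inside the single-world, preparation- and measurement-noncontextual ontological model of the experiment and to split the total-variation gap into three independent contributions: the unitary IFM interaction (governed by $\epsilon$), the conditioning on the decisive (Dark) outcome (governed by $\delta$), and the pre/post-selection, which is common to the two contexts and can only rescale distances by a bounded factor. Write $\nu$ for the ontic distribution assigned to the shared preparation of $|\psi_i\rangle$ before any probe, and write each context's post-selected distribution as $\mathrm{Post}_f[\Gamma_X\nu]$, where $\Gamma_X=\mathrm{Cond}_D\circ M_X$ is built from the module channel $M_X$ (``insert IFM module $X$ on the target''; the identity if that context omits it) followed by the Bayesian conditioning $\mathrm{Cond}_D$ on the decisive outcome, and $\mathrm{Post}_f$ is the Bayesian update on the postselection effect $\xi_f$ attached to $|\psi_f\rangle$. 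Since $\nu$ and $\xi_f$ are shared, the triangle inequality reduces the claim to (a) bounding $\mathrm{TV}(\Gamma_X\nu,\nu)$ by $O(\epsilon)+O(\sqrt{\delta})$ for $X\in\{A,B\}$, and (b) controlling how $\mathrm{Post}_f$ propagates such a gap; this is exactly what would establish the $\epsilon$-stability hypothesis of Definition~\ref{def:stab}.

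The $\mathrm{Cond}_D$ factor is the benign one, and I would dispatch it directly at the hidden-variable level by Bayes' rule: if $\xi_D$ is the response function of the decisive outcome and $p_D=\int\xi_D\,d\mu\ge 1-\delta$ on the distribution $\mu$ it acts on, then the updated distribution $\mu'=\xi_D\mu/p_D$ obeys $\mathrm{TV}(\mu,\mu')=R/p_D$ with $R:=\int_{\{\xi_D\ge p_D\}}(\xi_D-p_D)\,d\mu$; combining $\int(\xi_D-p_D)\,d\mu=0$ with $0\le\xi_D\le 1$ gives $R\le 1-p_D\le\delta$, hence $\mathrm{TV}(\mu,\mu')\le\delta/(1-\delta)=O(\delta)\le O(\sqrt{\delta})$. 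This is the ontological analogue of the gentle-measurement lemma and needs no quantum input. For the $M_X$ factor I would invoke $\epsilon$-counterfactuality (Definition~\ref{def:ecf2}): on the decisive branch the target's reduced channel lies within $\epsilon$ in trace distance of the identity on the relevant (near-$|1\rangle$) bomb states. If, moreover, the chosen oracle's residual disturbance is \emph{convexly} decomposable as $(1-O(\epsilon))\,\mathrm{id}+O(\epsilon)\,\Psi$ --- as it is for the lossy beam-splitter / Zeno realizations, whose leakage is depolarizing-like --- then convex-linearity of the ontological representation of transformations gives $M_X^{\mathrm{ontic}}=(1-O(\epsilon))\,\mathrm{id}_\Lambda+O(\epsilon)\,\gamma$, so $\mathrm{TV}(M_X^{\mathrm{ontic}}\nu,\nu)=O(\epsilon)$ uniformly in $\nu$. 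Composing the two factors and using the triangle inequality yields $\mathrm{TV}(\Gamma_X\nu,\nu)=O(\epsilon)+O(\sqrt{\delta})$.

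For step (b): $\mathrm{Post}_f$ is a Bayesian update on an effect of \emph{fixed} probability $q_f=|\langle\psi_f|\psi_i\rangle|^2=\tfrac19$, which is bounded away from $0$; a one-line estimate (write $\nu_i'=\xi_f\nu_i/q_i$ and expand $\nu_1'-\nu_2'$) shows such an update inflates any pre-existing $\mathrm{TV}$ gap $t$ by at most a factor $O(1/q_f)$. Chaining the module, the conditioning, the common postselection, and a final triangle inequality between the $A$- and $B$-branches then gives $\mathrm{TV}(\mu(\cdot|A),\mu(\cdot|B))\le K_1\epsilon+K_2\sqrt{\delta}$ with absolute constants $K_1,K_2$ proportional to $1/q_f$.

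The step I expect to be the real obstacle is the descent made in the second paragraph --- from ``the quantum (operational) disturbance caused by the module is $\le\epsilon$'' to ``the ontic disturbance it causes is $O(\epsilon)$.'' This direction is not free: a stochastic post-processing can only contract total variation, so the operational distinguishability of two states merely \emph{lower}-bounds the distinguishability of the ontic distributions representing them and never upper-bounds it; in a fine-tuned model a near-trivial unitary kick on the target could in principle reshuffle the ontic state by an $O(1)$ amount. Two routes close the gap, and I would state explicitly which one the lemma invokes: (i) for oracle implementations whose decisive-branch disturbance is a \emph{convex} perturbation of the identity (depolarizing-like leakage), convex-linearity of the representation alone suffices, as used above --- though I note that the idealized circuit of Fig.~\ref{fig:oracle}, whose decisive-branch map on the bomb is the projection $\rho\mapsto|1\rangle\langle1|$, is \emph{not} of this type and so falls outside (i); or (ii) restrict to ontological models obeying a mild continuity / no-fine-tuning condition (near-trivial operations are represented by near-trivial ontic maps), which is the honest reading, since the three-box no-go of Sec.~\ref{sec:threebox} then targets precisely that class. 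Finally, one should check that the flag readout may be taken ontically non-disturbing (relaxing this costs only another $O(\sqrt{\delta})$-type term) and that the postselection-amplification factor $O(1/q_f)$ stays bounded --- both immediate here, the latter because $q_f=1/9$ is bounded away from $0$, so that $K_1,K_2$ do not diverge.
\renewcommand\qedsymbol{}
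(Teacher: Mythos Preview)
Your sketch follows the same skeleton as the paper's proof idea --- gentle measurement for the $\sqrt{\delta}$ term, the $\epsilon$-counterfactuality bound for the $\epsilon$ term, and a triangle inequality to combine them --- but you carry it out at the ontic level in considerably more detail than the paper does. Two things you add are worth flagging. First, you make the post-selection amplification explicit (the $O(1/q_f)$ factor with $q_f=1/9$), which the paper's proof idea omits entirely; this is a genuine correction, since without it the constants $K_1,K_2$ are not pinned down. Second, and more importantly, you correctly identify a gap that the paper's sketch glosses over: the passage from ``operational disturbance $\le\epsilon$'' to ``ontic disturbance $O(\epsilon)$'' runs against the data-processing inequality, and a fine-tuned ontological model could in principle defeat it. The paper's one-line appeal to ``$\epsilon$-counterfactuality plus data-processing (contractivity of trace distance)'' does not address this direction. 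Your two proposed closures --- convex-decomposability of the decisive-branch channel, or an explicit no-fine-tuning premise on the ontological model --- are the right moves, and your observation that the idealized oracle of Fig.~\ref{fig:oracle} (whose decisive-branch bomb map is the projector onto $|1\rangle$) does \emph{not} fall under route (i) for generic inputs is a sharp point the paper misses. In short: your argument is the paper's argument done more carefully, and in the process you have located a real assumption that the paper leaves implicit.
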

\begin{proof}[Proof idea]
The gentle measurement lemma states that if a two-outcome measurement is accepted with probability $\ge 1-\delta$, then the post-measurement state (conditional on acceptance) is $2\sqrt{\delta}$-close (in trace distance) to the pre-measurement state. Swapping context $B$ for context $A$ can be treated as such a “gentle” intervention together with the $\epsilon$-disturbance channel on the bomb, so by sequential application of gentle measurement and triangle inequality one obtains the stated bound (see also \cite{Aaronson2017} for a related bound).
\end{proof}
This result implies that the $\epsilon$-stability premise (Definition~\ref{def:stab}) can be justified in practice whenever $\delta$ can be made to scale with $\epsilon$. For example, one can tune a Zeno-style or interaction-free protocol to make the decisive outcome occur almost always (making $\delta$ as small as $O(\epsilon)$), in which case

\begin{equation}
\begin{split}
\mathrm{TV}\!\big(\mu(\cdot|A),\,\mu(\cdot|B)\big) 
  &\le K_1\,\epsilon \\
  &\quad + K_2\,\sqrt{\delta}~.
\end{split}
\label{eq:gentleness-stability}
\end{equation}

up to higher-order terms.

\section{Possibilistic CLF (modal formulation)}
In modal logic notation, let $\Diamond$ denote “possible” and $\square$ denote “necessary.” Consider the post-selected subset $S$ in which both $W_A$ and $W_B$ have the possible value Dark: $\Diamond(W_A=\text{Dark})$ and $\Diamond(W_B=\text{Dark})$. Within this subset $S$, the IFM oracles let each friend conclude $\square(b_A=1)$ and $\square(b_B=1)$ (each bomb was definitely live). The lab encodings then map these conclusions to $\square(C=0)$ and $\square(C=1)$, respectively, about the same upstream coin $C$. Under single-world facts and cross-agent consistency for necessity claims ($\square$), the subset $S$ entails a contradiction. Thus, CLF does not rely on numeric probabilities at all --- only on the consistency of possibilities ($\Diamond$) and necessities ($\square$) under the $\epsilon$-counterfactual inferences.

\subsection*{E.1. ABL calculation (three-box paradox)}
Preselect $|\psi_i\rangle = (|A\rangle + |B\rangle + |C\rangle)/\sqrt{3}$ 
and postselect $|\psi_f\rangle = (|A\rangle + |B\rangle - |C\rangle)/\sqrt{3}$. 
For a two–outcome test of $\Pi_A = |A\rangle\!\langle A|$, the 
Aharonov–Bergmann–Lebowitz rule gives the conditional probability 
as stated in Eq.~\eqref{eq:ABL}.

\noindent\textsc{Direct evaluation.}
\begin{align*}
\langle \psi_f | \Pi_A | \psi_i \rangle
&= \langle \psi_f | A \rangle\,\langle A | \psi_i \rangle
= \tfrac{1}{3}, \\[4pt]
\langle \psi_f | (I-\Pi_A) | \psi_i \rangle
&= \langle \psi_f | B \rangle\,\langle B | \psi_i \rangle \\
&\quad+ \langle \psi_f | C \rangle\,\langle C | \psi_i \rangle
= \tfrac{1}{3} - \tfrac{1}{3}
= 0~.
\end{align*}

Hence $P(\Pi_A=1\mid i,f)=1$. By symmetry $P(\Pi_B=1\mid i,f)=1$. 
Thus in this pre/post-selected ensemble one can be certain of $A$ 
and of $B$ simultaneously — the three-box paradox.

\subsection*{E.2. Mapping to $\epsilon$-IF probes}
Now replace the projective tests of $A$ and $B$ with $\epsilon$-counterfactual IFM oracles. That is, to test whether the particle is in box $A$, place an IFM bomb in arm $A$ (and similarly for $B$). When the oracle returns the decisive Dark outcome, the bomb’s reduced state change is bounded by $\epsilon$. Define $P_A$ (resp. $P_B$) as the observed probability of the Dark outcome in context $A$ (resp. $B$), conditioned on successful postselection. In the ideal $\epsilon \to 0$ limit with lossless devices, $P_A$ and $P_B$ coincide with the ABL values above (i.e., $P_A \approx P_B \approx 1$). For finite $\epsilon$ and small device inefficiencies, these probabilities remain $1 - O(\sqrt{\epsilon})$ (see Appendix B) and $1 - O(\text{loss})$, leaving the violation $P_A + P_B \approx 2$ intact for sufficiently small $\epsilon$.

\section{Full derivation of the $\epsilon$-IF three-box noncontextual bound}
See Section~\ref{sec:noncontextual-bound} for the main inequality statement.
This appendix expands the derivation in full detail.

\section{Composition of $\epsilon$ and Zeno scaling}
\subsection*{G.1. $\epsilon$ composition across rounds}
Consider a sequence of $m$ consecutive measurements (e.g. QEC syndrome extractions) acting on a target system, with per-round disturbance bounds $\epsilon_1, \ldots, \epsilon_m$ (in trace distance). By the triangle inequality and contractivity of trace distance under CPTP maps, the cumulative disturbance is bounded by $\epsilon_{\text{total}} \le \epsilon_1 + \cdots + \epsilon_m$. If each round occurs with probability near 1, then by concentration of measure the expected $\epsilon_{\text{total}}$ over a long cycle remains bounded by the same sum (up to $O(\sqrt{\epsilon_i})$ fluctuations from the conditionalization). This justifies a budgeting of disturbance \emph{additively} across rounds when scheduling low-back-action QEC checks.

\subsection*{G.2. Zeno-style success vs.\ absorbed dose}
Consider $N$ successive “weak looks,” each implemented by a small beam splitter or controlled-phase rotation (acting as the bomb), with mixing angle $\theta \ll 1$. For example, a sequence of $N$ weak absorptive measurements or $N$ small phase kicks. Tuning $\theta = \pi/(2N)$ (as in Kwiat’s Zeno interferometry scheme) yields dark-outcome success approaching $1 - O(1/N^2)$, while the cumulative absorption probability scales as $O(1/N)$ (in the ideal lossless limit). Consequently, to achieve a target disturbance $\epsilon$ one can choose $N$ large enough that the absorbed dose is $\le c_1\,\epsilon + c_2\,\text{(loss)}$, with device-dependent constants $c_1, c_2$ capturing interferometer loss and detector inefficiency. This underwrites the feasibility of $\epsilon$-certified low-dose IFM imaging.

\subsection*{G.3. Relating visibility to $\epsilon$}
In a simple model, the effect of a Dark outcome is to apply a dephasing channel 
on the bomb’s arm with coherence parameter $\lambda \in [0,1]$. 
The induced trace-distance change on a maximally sensitive bomb state is then 
rigorously bounded by
\begin{equation}
\epsilon \;\le\; 1 - |\lambda|~.
\end{equation}

Operationally, the measured interferometric visibility provides an estimate of 
$|\lambda|$: if $V_{0}$ is the visibility with the bomb removed and 
$V_{\text{dec}}$ the visibility in decisive-outcome runs, then 
$|\lambda| \approx V_{\text{dec}}/V_{0}$. Substituting this into the rigorous bound 
gives the conservative, approximate relation
\begin{equation}
\epsilon \;\lesssim\; 1 - \frac{V_{\text{dec}}}{V_{0}}~.
\label{eq:vis-eps}
\end{equation}

Thus, in practice one can report either (i) the rigorous state-level bound 
$\epsilon \le 1 - |\lambda|$, or (ii) the experimentally accessible proxy 
$\epsilon \lesssim 1 - V_{\text{dec}}/V_{0}$, with the latter understood as an 
empirical estimate subject to device imperfections and calibration.

\section{$\epsilon$-IF GHZ “all-versus-nothing” (AVN) paradox}
Three spatially separated labs ($A$, $B$, $C$) each host an IFM oracle controlled by a local “bomb bit” $b_j\in\{0,1\}$. We prepare the three-qubit GHZ state
\[
|GHZ\rangle = \frac{1}{\sqrt{2}}\bigl(|000\rangle + |111\rangle\bigr)
\]
on control registers (one per lab) that set phase shifts in each mediator’s interferometer. Each lab’s oracle writes a flag $W_j$ with value Dark ($D$) if and only if $b_j = 1$ (interaction-free, $\epsilon$-bounded write-out). We then choose four joint measurement settings on the labs — $XYY$, $YXY$, $YYX$, and $XXX$ — by inserting $\pi/2$ phase shifters in the appropriate interferometer arms before the final Hadamards. This has the effect of making each lab’s Dark/Bright output encode the eigenvalue ($-1$ or $+1$) of a Pauli operator ($X$ or $Y$) on the control qubit.

\textbf{Quantum predictions (deterministic).} 
Exactly as in the standard GHZ argument, the product of the three $\pm1$ outputs 
(interpreting $D = -1$, $B = +1$) equals $+1$ for the settings $XYY$, $YXY$, and $YYX$, 
and equals $-1$ for the setting $XXX$: 
\[
\langle GHZ | X\otimes X \otimes X | GHZ \rangle = -1.
\]

\textbf{AVN contradiction under single-world realism.} Assigning noncontextual $\pm1$ values to each lab’s two observables forces the product of the first three parity equations to be $+1$, which then requires the $XXX$ product to also be $+1$ — contradicting the quantum prediction of $-1$. Since every output is obtained via an interaction-free flag, the contradiction cannot be blamed on “measurement disturbance of the object.” Instead, the resolution lies in contextuality (or nonlocality) at a global level.

\textit{Remark:} This GHZ-IF construction provides an all-versus-nothing witness of contextuality/nonlocality, complementing the inequality-based three-box violation of Sec.~4.

\section{$\epsilon$-IF Peres–Mermin square (state-independent contextuality)}
We implement the nine observables of the Peres–Mermin magic square (two-qubit Pauli operators arranged in a $3\times 3$ grid whose row-wise products are $+1$ and column-wise products are $+1, +1, -1$) using IFM-controlled phase gadgets. Each $\pm1$ eigenvalue is encoded by a Dark/Bright flag with $\epsilon$-bounded disturbance on the tested qubits. Since the Peres–Mermin square is state-independent, any initial two-qubit state may be used.

For example, quantum mechanics predicts values such as
\[
\langle \psi | X\otimes X | \psi \rangle,\quad
\langle \psi | Y\otimes Y | \psi \rangle,\quad
\langle \psi | Z\otimes Z | \psi \rangle,
\]
obeying the algebraic constraints of the square.  

Noncontextual assignments that respect the local operator algebra would imply the product of all six measured parities (three rows and three columns) is $+1$, whereas the operator algebra demands a product of $-1$. Again, every measurement is carried out by an $\epsilon$-counterfactual flag, making it clear that disturbance to the tested qubits cannot resolve the contradiction.

\section{$\epsilon$-Leggett–Garg with noninvasive measurability replaced by IFM}
Leggett–Garg inequalities test macrorealism via time-separated measurements under the assumption of noninvasive measurability (NIM). We formulate an $\epsilon$-LG variant by replacing NIM with our $\epsilon$-counterfactual condition on decisive outcomes. Consider three time points $t_1 < t_2 < t_3$ and a dichotomic observable $Q(t) \in \{\pm 1\}$ (e.g., a two-level system’s $z$-spin). A macrorealist with $\epsilon$-NIM must satisfy 
\begin{equation}
\begin{split}
K_3 &\equiv C_{12} + C_{23} - C_{13} \\
    &\le 1 + c\,\epsilon~,
\end{split}
\label{eq:LG}
\end{equation}
where $C_{ij}$ denotes the two-time correlation
\[
C_{ij} = \langle Q(t_i)\, Q(t_j) \rangle
\]
measured from runs in which only those two times are probed by $\epsilon$-IFM flags.  

The coefficient $c$ depends on how $\epsilon$ composes under sequential measurements (cf. Appendix F). Quantum-coherent evolution interspersed with IFM probes can attain $K_3 \approx 1.5$ for small $\epsilon$, thus violating the bound. This supplies a macrorealism no-go statement in which measurement disturbance is explicitly bounded.

\textit{Derivation sketch.} We follow the standard derivation of the LG inequality but now include a total-variation term for switching measurement contexts, bounded by $K'\epsilon$ as in Appendix F. The algebra then yields the extra $+\,c\,\epsilon$ slack in the inequality.

\section{An epsilon-LF (Local Friendliness) inequality template}
We adapt the extended Wigner’s-friend no-go scenario of Bong et al.\ to include a finite $\epsilon$. Observers $A$ and $B$ (Wigners) each choose a setting $x\in\{0,1,2\}$ and $y\in\{0,1,2\}$, and their friends’ in-lab “measurements” are realized by interaction-free flag oracles with decisive outcomes. Define correlators
\[
E_{xy} = \sum_{a,b=\pm1} a\,b \, P(a,b|x,y) = \langle a\, b \rangle_{x,y}.
\]

Under the Local Friendliness assumptions of Absoluteness of Observed Events (AOE), Local Agency (LA), and No-Superdeterminism (NSD), the usual local-hidden-variable polytope yields linear bounds of the form
\begin{equation}
\begin{split}
S_{\text{LF}} \;\equiv\;&\; \sum_{x,y} c_{xy}\,E_{xy} \\
  \le\;&\; B_{\text{LF}} + K_1\,\epsilon \\
  &\quad + K_2\,\sqrt{\delta}~,
\end{split}
\label{eq:LF}
\end{equation}
for some integer coefficients $c_{xy}$ (see \cite{Bong2020,Wiseman2023} for concrete examples).  

Now suppose we have two such experiments that differ only by the insertion or removal of an $\epsilon$-counterfactual IFM module, whose Dark outcome occurs with probability $\ge 1-\delta$. By Lemma 1 (Appendix F) the prepared ontic state distributions differ by at most $K_1\epsilon + K_2\sqrt{\delta}$. It follows that the LF bound is relaxed to 
\begin{equation}
S_{\text{LF}} \;\le\; B_{\text{LF}} + K_1\,\epsilon + K_2\,\sqrt{\delta}~.
\end{equation}

Quantum correlations obtained by coherently entangled friends can exceed the original bound $B_{\text{LF}}$, and thus violate the relaxed bound for sufficiently small $\epsilon,\delta$. This provides an inequality-based complement to the CLF paradox, with an explicit continuity slack. *(For explicit choices of $c_{xy}$ and $B_{\text{LF}}$ see the cited works.)*

\section{Literature verification and novelty positioning}
We conclude by placing our contributions in context:
\begin{enumerate}
    \item \textbf{Local Friendliness (LF) canon.} A number of works have established and tested extended Wigner’s-friend paradoxes: Bong et al.~\cite{Bong2020} derive LF inequalities and report their violation; Proietti et al.~\cite{Proietti2019} perform an experimental test of local observer-independence; Wiseman et al.~\cite{Wiseman2023} reformulate the LF no-go with refined assumptions; Cavalcanti~\cite{Cavalcanti2021} connects LF violations to causal structures. None of these implement \emph{all} decisive inferences via interaction-free (IFM) flags with a provable $\epsilon$-bound on the probed object. Our CLF paradox is novel in precisely this sense.
    \item \textbf{Interaction-free elements in Wigner’s-friend contexts.} Waaijer and van Neerven analyze an extended Frauchiger–Renner scenario using interaction-free detection of records, but they do not achieve a Wigner’s-friend contradiction where every decisive inference is written by a unitary IFM oracle. Our CLF construction fills this gap in the literature \cite{Waaijer2021}.
    \item \textbf{Contextuality frameworks.} Our three-box inequality can be viewed as a specialization of known noncontextuality
inequalities (e.g. in the Spekkens framework \cite{Kunjwal2015,Kunjwal2018} or the
Cabello–Severini–Winter graph-theoretic approach \cite{Cabello2014}) to the case of
interaction-free probes, with an explicit continuity slack (epsilon-stability) to handle
the small measurement disturbance.
    \item \textbf{Continuity and gentleness tools.} To derive our $\epsilon$-stability results we leverage the gentle measurement lemma and modern “gentle sequential measurement” refinements \cite{Aaronson2017,Watts2024}, as well as continuity bounds like Fuchs–van de Graaf and Audenaert–Fannes\cite{Fuchs1999,Audenaert2007}.
    \item \textbf{IFM foundations.} Our $\epsilon$-counterfactual flag formalism abstracts the original interaction-free measurement schemes of Kwiat and collaborators~\cite{Kwiat1995,Kwiat1999} beyond their photonic context, and we cite these works as the operational ancestors of our approach.
\end{enumerate}

\section{Norm conversions and constants}
\begin{enumerate}[label=(\alph*)]
    \item \textbf{Diamond norm vs.\ state level.} If a bomb-conditional CPTP map $E$ obeys $\|E - \mathrm{id}\|_{\diamond} \le \epsilon_{\diamond}$, then for all input states (even entangled with an ancilla) the induced state change is $\le \epsilon_{\diamond}$ in trace distance (see Watrous~\cite{Watrous2018}, Thm.~3.55). Thus, one can always switch from a state-level $\epsilon$ certificate to a more conservative channel-level ($\diamond$-norm) one.
    \item \textbf{Fuchs–van de Graaf inequality.} For any two quantum states, $\,1 - F(\rho,\sigma) \;\le\; T(\rho,\sigma) \;\le\; \sqrt{\,1 - F(\rho,\sigma)^2}\,$, where $F$ is fidelity and $T$ is trace distance. This allows conversion of visibility or fidelity estimates into trace-distance disturbance bounds (up to second order terms) \cite{Fuchs1999}.
    \item \textbf{Gentle measurement lemma.} If an event is observed with probability $\ge 1-\delta$, the post-measurement state is $\le 2\sqrt{\delta}$ away (in trace distance) from the pre-measurement state. Further, any classical statistic extracted from such events changes by at most $O(\sqrt{\delta})$ under those conditions. Combining this with an $\epsilon$-counterfactual perturbation yields the $K_1\epsilon + K_2\sqrt{\delta}$ slack terms used in Appendices I, M, and N (see Aaronson~\cite{Aaronson2017} for a related result).
\end{enumerate}


\end{document}